\documentclass[12pt]{article}
\usepackage{graphicx}
\usepackage{amssymb}
\usepackage{amsfonts}
\usepackage{amsthm}
\usepackage{amsmath}
\usepackage{a4wide}

\newtheorem{proposition}{Proposition}
\newtheorem{theorem}{Theorem}
\newtheorem{algorithm}{Algorithm}
\newtheorem{definition}{Definition}

\newtheorem{example}{Example}

\newcommand{\bx}{\mathbf{x}}
\newcommand{\dfdx}{\frac{\partial f(\mathbf{x})}{\partial x_j}}
\newcommand{\flag}{\textrm{flag}}

\setlength{\topmargin}{-0.375in}

\renewcommand\footnotemark{}

\begin{document}

\title{Nested Canalyzing Depth and Network Stability}

\author{Lori Layne \thanks{\hspace{-0.2in}$^*$ Department of
    Mathematical Sciences, Clemson University, Clemson, SC 29634} \and
  Elena Dimitrova \and Matthew Macauley}

\maketitle

\begin{abstract}
  We introduce the \emph{nested canalyzing depth} of a function, which
  measures the extent to which it retains a nested canalyzing
  structure. We characterize the structure of functions with a given
  depth and compute the expected activities and sensitivities of the
  variables. This analysis quantifies how canalyzation leads to higher
  stability in Boolean networks. It generalizes the notion of nested
  canalyzing functions (NCFs), which are precisely the functions with
  maximum depth. NCFs have been proposed as gene regulatory network
  models, but their structure is frequently too restrictive and they
  are extremely sparse. We find that functions become decreasingly
  sensitive to input perturbations as the canalyzing depth increases,
  but exhibit rapidly diminishing returns in stability. Additionally,
  we show that as depth increases, the dynamics of networks using
  these functions quickly approach the critical regime, suggesting
  that real networks exhibit some degree of canalyzing depth, and that
  NCFs are not significantly better than functions of sufficient depth
  for many applications of the modeling and reverse engineering of
  biological networks.
\end{abstract}

%\maketitle
%\doublespace

%%=====================================
\section{Introduction}\label{sec:intro}
%%=====================================

A large influx of biological data on the cellular level has
necessitated the development of innovative techniques for modeling the
underlying networks that regulate cell activities. Several discrete
approaches have been proposed, such as Boolean
networks~\cite{Kauffman:69}, logical models~\cite{Saez:07}, and Petri
nets~\cite{Gambin:06}. In particular, Boolean networks have emerged as
popular models for gene regulatory networks
\cite{Albert:03,Li:04}. However, not all Boolean functions accurately
reflect the behavior of biological systems, and it is imperative to
recognize classes of functions with biologically relevant properties.
One such notable class is the canalyzing functions, introduced by
Kauffman~\cite{Kauffman:69}, whose behavior mirrors biological
properties described by Waddington~\cite{Waddington:42}. The dynamics
of Boolean networks constructed using these functions are of
considerable interest when determining their modeling
potential. Random Boolean networks constructed using such functions
have been shown to be more stable than networks using general Boolean
functions, in the sense that they are insensitive to small
perturbations~\cite{Kauffman:03}. Karlssona and
H\"ornquist~\cite{Karlssona:07} explore the relationship between the
proportion of canalyzing functions and network
dynamics. In~\cite{Kauffman:03}, the authors further expand the
canalyzation concept and introduce the class of nested canalyzing
functions (NCFs). In \cite{Kauffman:04}, networks of NCFs are shown to
exhibit stable dynamics. Also, Nikolajewa, et~al.~\cite{Nikolajewa:06}
divide NCFs into equivalence classes based on their representation and
show how the network dynamics are influenced by choice of equivalence
class. Nested canalyzing functions have a very restrictive structure
and become increasingly sparse as the number of input variables
increases~\cite{Jarrah:07}. Also, it is possible that not all
variables exhibit canalzying behavior. Hence, a relaxation of the
nested canalyzing structure is necessary.

In this article, we further explore canalyzation by analyzing
functions that retain a partially nested canalyzing structure. We
quantify the degree to which a function exhibits this canalyzing
structure by a quantity we call the \emph{nested canalyzing
  depth}. Functions of depth $d$ generalize the nested canalyzing
functions, because NCFs are the special case of when $d=k$, where $k$
is the number of Boolean variables. In Section~\ref{sec:properties},
we demonstrate notable properties of these partially nested canalyzing
functions, and show that their representation is unique. This leads to
a theorem about the structure of functions of depth $d$, which
generalizes a result in~\cite{Jarrah:07} about NCFs. In
Section~\ref{sec:activities}, we compute the expected activities and
sensitivities of functions given their canalyzing depths, which are
extensions of results of Shmulevich and Kauffman~\cite{Shmulevich:04}
about activities and sensitivities of canalyzing functions. We prove
that as canalyzing depth increases, functions become less sensitive to
perturbations in the input; however, the marginal benefit incurred by
adding further canalyzing variables sharply decreases. As a result,
functions of larger depth provide an improvement in sensitivity over
general canalyzing functions, but imposing a fully nested canalyzing
structure provides little benefit over functions of sufficient
canalyzing depth. Finally, in Section~\ref{sec:derrida}, we use
Derrida plots to show that dynamics of networks constructed using more
structured functions rapidly approach the well-known critical regime,
whereas networks with functions of relatively few nested canalyzing
variables remain in the chaotic phase. This is in contrast to the
findings of Kauffman et~al.~\cite{Kauffman:04}, but in agreement with
recent work of Peixoto~\cite{Peixoto:10}, and it further supports the
biological utility of certain canalyzing functions.

%==========================================
\section{Nested Canalyzing Depth}\label{sec:depth}
%==========================================

A Boolean function $f(x)=f(x_1,\dots,x_k)$ is \emph{canalyzing} if it
has a variable $x_i$ for which some input $x_i=a_i$ implies $f(x)=b_i$
for some $b_i\in\{0,1\}$. In this case, $x_i$ is a \emph{canalyzing
  variable}, the input $a_i$ is its \emph{canalyzing value}, and the
output value $b_i$ when $x_i=a_i$ is the corresponding \emph{canalyzed
  value}. Note that if $f$ is constant, then every variable is
trivially canalyzing. 

If a canalyzing variable $x_i$ does not receive its canalyzing input
$a_i$, then the output is some function $g_i(\hat{x}_i)$, where
$\hat{x}_i =(x_1,\ldots, x_{i-1}, x_{i+1}, \ldots, x_k)$. If $g_i$ is
constant, $x_i$ is called a \emph{terminal canalyzing variable} of
$f$. Note that for each $i\neq j$, $x_j$ is then trivially canalyzing
in $g_i$.

If $g_i$ is not constant, we ask whether it too is canalyzing. If so,
there is a canalyzing variable $x_j$ with canalyzing input $a_j$, and
when $x_j\neq a_j$, the output of $f$ is a function
$g_{ij}(\hat{x}_{ij})$, which may or may not be canalyzing. Here,
$\hat{x}_{ij}$ denotes $x$ with both $x_i$ and $x_j$
omitted. Eventually, this process will terminate when the function $g$
is either constant or no longer canalyzing.
\begin{definition} 
  Let $f(x_1,\ldots, x_k)$ be a Boolean function.  Suppose that for a
  permutation $\sigma\in S_k$, an integer $d>0$ and a Boolean
  function $g(x_{\sigma(d+1)}, \ldots, x_{\sigma(k)})$,
  \begin{equation}\label{eq:partncf}
    f=\left\{\begin{array}{lll} 
    b_1 && x_{\sigma(1)} =  a_1\\ 
    b_2 && x_{\sigma(1)} \neq a_1,\; x_{\sigma(2)} = a_2\\ 
    b_3 && x_{\sigma(1)} \neq a_1,\; x_{\sigma(2)} \neq a_2,\; 
    x_{\sigma(3)} = a_3\\ 
    \;\vdots && \;\vdots \\ 
    b_d && x_{\sigma(1)} \neq a_1,\ldots,\; x_{\sigma(d-1)} 
    \neq a_{d-1},\; x_{\sigma(d)} = a_d \\ 
    g && x_{\sigma(1)} 
    \neq a_1, \ldots,\; x_{\sigma(d)} \neq a_d
    \end{array}\right.
  \end{equation}
  where either $x_{\sigma(d)}$ is a terminal canalyzing variable (and
  hence $g$ is constant), or $g$ is non-constant and none of the
  variables $x_{\sigma(d+1)}, \ldots, x_{\sigma(k)}$ are canalyzing in
  $g$. Then $f$ is said to be a \emph{partially nested canalyzing
    function}.  The integer $d$ is called the \emph{active nested
    canalyzing depth} of $f$, and the \emph{full nested canalyzing
    depth} of $f$ is $d$ if $g$ is non-constant, and $k$
  otherwise. The sequence $x_{\sigma(1)},\dots,x_{\sigma(k)}$ is
  called a \emph{canalyzing sequence} for $f$.
\end{definition}
If we speak of simply the ``canalyzing depth'' or ``depth'' of a
function, we are referring to the full nested canalyzing depth. In the
next section, we will show that the depth is well-defined, i.e., that
it does not depend on the choice of $\sigma\in S_k$. The class of
\emph{nested canalyzing functions} (NCFs) \cite{Jarrah:07,Kauffman:03}
are precisely those with active depth $k$. A constant function (all
$2^k$ entries in the truth table are the same) is not an NCF by the
classical definition, but changing a single value in the truth table
suddenly makes it nested canalyzing. In our set-up, both of these
functions have full depth $k$. Constant functions have active depth
0.  For completeness, we will say that a non-canalyzing function has
active and full nested canalyzing depth 0.

Canalyzing and nested canalyzing functions have been used in gene
network models because they possess biologically relevant
features~\cite{Waddington:42}. For example, in a gene regulatory
network, a collection of $k$ genes that affect the expression level of
a particular gene can be modeled with a $k$-variable Boolean
function. While it is believed that some of these relationships are
canalyzing (e.g., if $A$ is expressed, then $B$ is not expressed,
regardless of the states of the other genes), it is unreasonable to
expect that all relevant genes will act in a nested canalyzing
manner. For instance, transcription factors in gene regulatory
networks likely display canalyzing behavior, while other proteins do
not.  Thus, situations will arise in which nested canalyzing functions
do not fully capture the dynamics of biological systems.  For example,
in the gene regulatory network of the cell cycle in {\it Saccharomyces
  cerevisiae}, genes Cdc14 and Cdc20 are the only canalyzing inputs
for the Swi5 gene.  The remaining function is constant, which does not
fit Kauffman's definition of nested canalyzing, nor can this gene's
behavior be modeled with an NCF~\cite{Li:04}. Thus, when reverse
engineering a biological network with partial data, the rigid NCF
structure is restrictive and likely inappropriate to model the
behavior of the system.  Also, the number of NCFs becomes rapidly
sparse in the set of Boolean functions as $k$ increases.  For
instance, the proportion of NCFs in 6 variables is on the order of
$10^{-15}$~\cite{Jarrah:07}.  Because of this sparsity, it is unlikely
that a nested canalyzing function fits a given data set. We will show
why functions with less than full canalyzing depth exhibit nearly
identical key features as NCFs with regards to activities,
sensitivities, and stability, promoting their potential use in
biological models.

\begin{example}\label{ex:PNCF}
  Consider the function defined as
  \[
  f(x_1, x_2, x_3, x_4) = x_2 \wedge (\neg x_1 (\wedge( x_3
  \textup{ XOR } x_4)))\,,
  \]
  which is similar to Example 2.4 in~\cite{Jarrah:07}. There is a
  canalyzing sequence $x_2,x_1$, with $a_2=b_2=0$, and $a_1=1$ and
  $b_1=0$. The remaining function $g(x_3, x_4) = x_3 \textup{ XOR }
  x_4$ is not canalyzing in either variable. Therefore, $f$ is a PNCF
  of active and full depth $2$.
\end{example}

%%============================================================================
\section{Properties of Partially Nested Canalyzing Functions}\label{sec:properties}
%%============================================================================

\begin{proposition}
  \label{prop:half-table}
  Let $f(x)$ be a $k$-variable Boolean function. Then
  \begin{itemize}
  \item[(i)] If $x_i = a_i$ implies $f(x)=b_i$, and $x_i \neq a_i$
    leaves $g_i(\hat{x}_i)$, then at least half of the truth table
    values of $f$ must be $b_i$.
  \item[(ii)] If exactly half the truth table values of $f$ are $b_i$,
    then either $x_i$ is terminally canalyzing, or $f$ is a
    non-canalyzing function.
  \end{itemize}
\end{proposition}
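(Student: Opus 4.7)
The plan is a direct truth-table counting argument, with part (ii) being essentially the equality case of the inequality in part (i). No clever construction is needed; the whole proof should fit in a few lines.

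For part (i), I would observe that of the $2^k$ rows of the truth table of $f$, exactly $2^{k-1}$ satisfy $x_i = a_i$, and by hypothesis every one of those rows outputs $b_i$. Hence at least $2^{k-1}$ of the $2^k$ values are $b_i$, which is at least half of the truth table.

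For part (ii), I would carry the same setup from part (i) and analyze when that bound is tight. If the number of $b_i$-valued rows is exactly $2^{k-1}$, then the $x_i = a_i$ subtable already exhausts every $b_i$-entry, so every row with $x_i \neq a_i$ must output $\neg b_i$. This is precisely the statement that $g_i(\hat{x}_i) \equiv \neg b_i$ is constant, so by definition $x_i$ is a terminal canalyzing variable of $f$. The alternative clause ``$f$ is a non-canalyzing function'' covers the degenerate case in which no variable satisfies the canalyzing premise of (i) at all; then the conclusion ``$x_i$ is terminally canalyzing'' is vacuous because no such $x_i$ exists, and $f$ is simply non-canalyzing.

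I do not anticipate any real obstacle. The only point that requires care is the logical bookkeeping in (ii): the implicit understanding that the hypothesis of (i) is inherited whenever it is applicable, with the non-canalyzing clause acting as the disclaimer for when it is not. Beyond this, the argument is pure counting on the two halves of the $k$-cube indexed by $x_i = a_i$ and $x_i \neq a_i$.
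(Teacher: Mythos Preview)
Your proposal is correct and matches the paper's proof essentially line for line: part (i) by counting the $2^{k-1}$ rows with $x_i=a_i$, and part (ii) as the equality case forcing the $x_i\neq a_i$ half to be identically $\neg b_i$, with the non-canalyzing alternative handled exactly as you describe. There is nothing to add.
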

\begin{proof}
The statement in (i) follows because $x_i=a_i$ for exactly half of the
input values in the truth table. The corresponding output value must
be $b_i$ for at least these inputs. To show (ii), suppose that $f$ is
canalyzing, and $x_i=a_i$ implies $f(x)=b_i$. By (i), $x_i\neq a_i$
implies $f(x)=\neg b_i$. Therefore, $g(\hat{x}_i)$ is constant and
$x_i$ is terminally canalyzing.
\end{proof} 
 
The canalyzing depth of a function can be computed in a
divide-and-conquer manner described in Algorithm~\ref{alg:depth}. The
algorithm scans the truth table for a canalyzing variable, and upon
finding one, removes the columns for which the canalyzing variable
takes the canalyzing input value. This is repeated until no more
canalyzing variables are present or a constant function remains.
Proposition~\ref{prop:half-table} and the structure of the truth table
imply that if there is a tie for $b$, then there is a terminally
canalzying variable or there are no canalyzing variables. Therefore,
it is not necessary to test both $b$ and $\neg{b}$ as possible
canalyzed values.  In the execution of the algorithm, we set a flag
whenever a tie for canalyzed value arises. \medskip

\noindent
\begin{tabular}{|l|}
%\framebox{
\hline
\vbox{
\begin{algorithm}\label{alg:depth} 
\hspace{-.1in}
\begin{enumerate}
\item Set $d=0$.  For $i=1\ldots k-1:$
  \begin{enumerate}
  \item Set $b=0$, $\flag=0$.  Let $\ell$ be the number of ones in the
    truth table.
    \begin{itemize}
    \item If $\ell==2^{k-i+1}$ return $k$. // Constant function remains
    \item If $\ell==2^{k-i}$, set $\flag=1$. // Tie in output value
    \item If $\ell>2^{k-i}$, $b=1$.
    \end{itemize}
  \item For remaining $k-i+1$ variables in truth table:
    \begin{enumerate}
    \item Let $x$ be the number of input ones and $y$ the number of input
      zeros that give output $b$.
    \item 
      \begin{itemize}
      \item If $x==2^{k-i}$, the current variable is canalyzing with input
        1 and output $b$. Remove canalyzing rows and current variable from
        truth table and break out of loop.
      \item If $y==2^{k-i}$, the current variable is canalyzing with input
        0 and output $b$.  Remove canalyzing rows and current variable from
        truth table and break out of loop.
      \end{itemize}  
    \end{enumerate}
  \item If no variables were found to be canalyzing, return $d$; else
    $d$++.
  \item If $\flag==1$, return $k$. // Constant function remains
  \end{enumerate}
\item Return $k$. 
\end{enumerate}
\end{algorithm}
}\\
\hline
%}}
\end{tabular}

\bigskip

Note that it takes exponential time simply to view the entire truth
table of $f$; however, the algorithm is linear in the size of the
table. Indeed, the $i^{\rm th}$ step of Algorithm~\ref{alg:depth}
takes $(k-i)\cdot2^{k-i}$ steps, and so the running time is
\[
\sum_{i=1}^k (k-i)2^{k-1}\leq k\cdot2^k\left(1 + \frac{1}{2} +
\frac{1}{4} + \ldots\right)=\mathcal{O}\left(k\cdot2^k\right).
\]  
We can use Algorithm~\ref{alg:depth} to show that canalyzing depth is
well-defined, meaning that it does not depend on the choice of
canalyzing sequence. First, in Theorem~\ref{thm:diamonds} we present
the functional form of a PNCF of canalyzing depth $d$.  Lemma 2.6
in~\cite{Jarrah:07} is the special case of Theorem~\ref{thm:diamonds}
when $f$ is nested canalyzing, i.e., when $d=k$.
%
% $f(x_1, \ldots, x_k) = y_1 \Diamond_1 (y_2 \Diamond_2(\cdots
%  (y_{k-1}\Diamond_{k-1} y_k)\cdots)).$}
%
%%
\begin{theorem}
  \label{thm:diamonds}
   Let $y_i=x_{\sigma(i)}+a_i+b_i$, $1\leq i\leq d$ and let
  \begin{equation}\label{eq:partncfform}
    f(x_1, \ldots, x_k) = y_1 \Diamond_1 (y_2 \Diamond_2(\ldots
    (y_{d}\Diamond_{d} g(x_{\sigma(d+1)}, \ldots, x_{\sigma(k)}))\ldots)),
  \end{equation}
  where
  \begin{displaymath}
    \Diamond_i = \left\{ \begin{array}{ll} \vee & \mbox{if\;\;} b_i
      =1\\ \wedge & \mbox{if\;\;} b_i = 0
    \end{array}\right.,
  \end{displaymath}
  $a_i, b_i \in \{0,1\}$ for $1 \leq i \leq d$, and 
  \begin{itemize}
    \item[(i)] None of the variables $x_{\sigma(d+1)}, \ldots,
      x_{\sigma(k)}$ are canalyzing in $g$, or
    \item[(ii)] $g$ is a constant function.
  \end{itemize} 
  Then $f$ has canalyzing depth $d$, with canalyzing sequence
  $x_{\sigma(1)},\ldots, x_{\sigma(d)}$. These variables have
  canalyzing values $a_1,\ldots, a_d$ and canalyzed values $b_1,
  \ldots, b_d$. Furthermore, any function of canalyzing depth $d$ can
  be represented in this form.
\end{theorem}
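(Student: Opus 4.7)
The plan is to prove both directions of the theorem by induction on the depth $d$, after first isolating the single computational fact that drives everything. Define $y_i = x_{\sigma(i)} + a_i + b_i$ (addition mod $2$). A direct case check on the four possibilities for $(a_i,b_i)$ shows that $y_i = b_i$ whenever $x_{\sigma(i)} = a_i$, and $y_i = \neg b_i$ otherwise. Since $\Diamond_i = \vee$ exactly when $b_i = 1$ and $\Diamond_i = \wedge$ exactly when $b_i = 0$, the value $b_i$ is the absorbing element for $\Diamond_i$ and $\neg b_i$ is the identity. Hence for any Boolean expression $z$, we have $y_i \Diamond_i z = b_i$ when $x_{\sigma(i)} = a_i$, and $y_i \Diamond_i z = z$ when $x_{\sigma(i)} \neq a_i$. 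This one observation matches the $i$-th line of the piecewise definition (\ref{eq:partncf}) with the nesting in (\ref{eq:partncfform}).

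For the ``if'' direction, I would show that any expression of the form (\ref{eq:partncfform}) satisfying hypothesis (i) or (ii) realizes the piecewise definition (\ref{eq:partncf}). Iterating the observation above from the outside in, I peel off $y_1 \Diamond_1$: if $x_{\sigma(1)} = a_1$, the expression collapses to $b_1$; if $x_{\sigma(1)} \neq a_1$, it reduces to $y_2 \Diamond_2(\ldots \Diamond_d g)$, and I continue inductively. Once all $d$ canalyzing variables fail to take their canalyzing values, the expression equals $g(x_{\sigma(d+1)},\ldots,x_{\sigma(k)})$. Under hypothesis (i), $g$ is non-constant with no canalyzing variable in $\{x_{\sigma(d+1)},\ldots,x_{\sigma(k)}\}$, so the process stops after exactly $d$ canalyzing steps and $f$ has active (hence full) depth $d$. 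Under hypothesis (ii), $g$ is constant, so $x_{\sigma(d)}$ is terminal canalyzing and the full depth is again $d$ (equal to $k$ in the degenerate case where $g$ replaces a single variable).

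For the ``only if'' direction, I would induct on $d$. The base case $d = 0$ is immediate: $f = g$ is either non-canalyzing or a constant, and the empty nesting yields $f = g$. For the inductive step, take a function $f$ of depth $d \geq 1$ with canalyzing sequence $x_{\sigma(1)},\ldots,x_{\sigma(d)}$ as in Definition~1. The restriction $g_1$ of $f$ to $x_{\sigma(1)} \neq a_1$ has canalyzing depth $d-1$ with sequence $x_{\sigma(2)},\ldots,x_{\sigma(d)}$ (or is constant, giving the terminal case). Apply the inductive hypothesis to write $g_1 = y_2 \Diamond_2 (\ldots \Diamond_d g)$ in the required form, then invoke the key observation from the first paragraph to conclude $f = y_1 \Diamond_1 g_1$, which is precisely (\ref{eq:partncfform}).

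The main obstacle is not conceptual but bookkeeping: one must verify that the four cases of $(a_i,b_i) \in \{0,1\}^2$ all satisfy the absorbing/identity dichotomy, and separately handle the boundary where $g$ is constant so that $x_{\sigma(d)}$ is terminal canalyzing. The subtlest point is confirming in the ``if'' direction that, under hypothesis (i), no variable outside the sequence $x_{\sigma(1)},\ldots,x_{\sigma(d)}$ can be canalyzing in $f$ itself: any such hypothetical canalyzing behavior would have to force $f$ to a constant value regardless of $x_{\sigma(1)},\ldots,x_{\sigma(d)}$, which contradicts the assumption that none of these variables are canalyzing in $g$.
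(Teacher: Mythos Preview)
The paper does not actually supply a proof of this theorem; it is stated as a generalization of Lemma~2.6 in Jarrah et~al.\ and then used in the proof of Proposition~\ref{prop:depth-well-defined}. Your argument is the natural one and is correct: the absorbing/identity dichotomy for $y_i \Diamond_i z$ is precisely what makes the nested form~(\ref{eq:partncfform}) unwind into the piecewise description~(\ref{eq:partncf}), and induction on $d$ handles the converse cleanly.

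Two small remarks. First, your final ``subtlest point''---that under hypothesis~(i) no variable outside the canalyzing sequence can be canalyzing in $f$ itself---is not actually required. The definition of depth only demands that $x_{\sigma(d+1)},\ldots,x_{\sigma(k)}$ fail to be canalyzing in $g$, not in $f$; hypothesis~(i) hands you this directly, so once you have exhibited the piecewise form~(\ref{eq:partncf}) with that $g$, the depth is $d$ by definition. The separate question of whether a \emph{different} canalyzing sequence could yield a different depth is the content of Proposition~\ref{prop:depth-well-defined}, which the paper proves afterward using this theorem, so you should not (and need not) import it here. Second, your parenthetical on case~(ii) is slightly tangled: when $g$ is constant and $d<k$, the active depth is $d$ while the full depth is $k$ by the paper's convention. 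The theorem's phrase ``canalyzing depth $d$'' is a bit loose in that case, but that is an imprecision in the statement rather than a flaw in your argument.
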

Proposition~\ref{prop:half-table} and our previous observations
indicate that in case of a tie in potential canalyzed values, we
cannot make a ``wrong'' choice for $b$ in the execution of
Algorithm~\ref{alg:depth}.  Hence, to show that the depth is unique,
it suffices to show that if there are multiple canalyzing variables at
a given iteration, the depth does not depend on our choice of
canalyzing variable.
\begin{proposition}\label{prop:depth-well-defined}
  The nested canalyzing depth computed using Algorithm~\ref{alg:depth}
  yields a unique answer. 
\end{proposition}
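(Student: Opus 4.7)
The plan is to induct on the number of variables $k$, showing that every valid execution of Algorithm~\ref{alg:depth} on $f$ produces the same value. By Proposition~\ref{prop:half-table} and the discussion preceding the statement, a tie in the potential canalyzed value $b$ signals either a terminal canalyzing variable or a non-canalyzing function; in both situations the algorithm's output is fixed regardless of the choice of $b$. So the only remaining source of ambiguity is the choice among two or more canalyzing variables at a single iteration, and it suffices to show that processing two canalyzing variables in either order yields the same final value.

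Fix a $k$-variable $f$ and assume Algorithm~\ref{alg:depth} returns a unique value for every Boolean function in fewer than $k$ variables. If $f$ is constant or non-canalyzing, the algorithm returns $k$ or $0$, respectively, on the first pass, and there is nothing to prove. Otherwise, suppose $x_i$ and $x_j$ are both canalyzing in $f$, with canalyzing inputs $a_i,a_j$ and canalyzed values $b_i,b_j$. Set $g_i=f|_{x_i\neq a_i}$, $g_j=f|_{x_j\neq a_j}$, and $f_{ij}=f|_{x_i\neq a_i,\,x_j\neq a_j}$. Since $x_j=a_j$ forces $f=b_j$ on every input, it still forces $g_i=b_j$; hence $x_j$ is canalyzing in $g_i$ with input $a_j$ and output $b_j$. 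The symmetric statement shows $x_i$ is canalyzing in $g_j$.

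Now apply the inductive hypothesis to the $(k-1)$-variable function $g_i$: Algorithm~\ref{alg:depth} returns a single value on $g_i$, which we may compute by selecting $x_j$ as its first canalyzing variable. That value depends only on $f_{ij}$, equaling either $k-1$ (if $f_{ij}$ is constant) or one more than the algorithm's output on $f_{ij}$ otherwise. The identical expression equals the algorithm's output on $g_j$ when we instead select $x_i$ first. Therefore the algorithm produces the same value on $g_i$ and on $g_j$, which forces running it on $f$ via $x_i$ first to match running it via $x_j$ first, completing the induction.

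The point requiring the most care is the possibility that $x_i$ is terminal in $f$, so that $g_i$ is constant, while $x_j$ is also canalyzing. If $x_i$ is terminal and $f$ is non-constant, then $f$ depends only on $x_i$; every other variable $x_j$ satisfies $f|_{x_j=a_j}=f$, which is non-constant, so $x_j$ cannot be canalyzing. Thus the configuration with a terminal $x_i$ coexisting with a second canalyzing variable never arises when $f$ is non-constant, and the inductive argument above covers all remaining cases.
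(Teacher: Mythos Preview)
Your proof is correct, but it follows a different route from the paper's. The paper argues at the level of a single iteration: if two variables $x_i,x_j$ are simultaneously canalyzing, a counting argument together with Proposition~\ref{prop:half-table}(i) forces $b_i=b_j$, hence $\Diamond_i=\Diamond_j$ in the representation of Theorem~\ref{thm:diamonds}; since adjacent factors with the same $\Diamond$ commute, the function (and thus every subsequent step of the algorithm) is literally unchanged by the swap. Your argument is instead an induction on $k$ in the style of a local-confluence (diamond) lemma: you show directly that either first choice $x_i$ or $x_j$ can be followed by the other, landing on the common residual $f_{ij}$, and then invoke the inductive hypothesis on the $(k-1)$-variable functions $g_i,g_j$ to conclude uniqueness. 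Your approach is more self-contained because it avoids Theorem~\ref{thm:diamonds} (stated here without proof), and it never needs the fact that $b_i=b_j$; the paper's approach, on the other hand, yields the extra structural information that simultaneous canalyzing variables share a canalyzed value and that the functional form in~\eqref{eq:partncfform} is unaffected by their transposition. Your final paragraph handling the terminal case is a genuine necessity in your framework (since the confluence step requires $g_i$ to be non-constant so that $x_j$ may be chosen next), whereas the paper's argument sidesteps it via the commutativity of $\Diamond$.
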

\begin{proof}
  Suppose that at a given iteration there are $m$ variables left in
  the truth table, and two of them are canalyzing. Note of the $2^m$
  values in the truth table, there are $2^{m-1}$ canalyzing inputs for
  one variable and $2^{m-1}$ for the other, $2^{m-2}$ of which are
  canalyzing for both. Therefore, by Part (i) of
  Proposition~\ref{prop:half-table}, the two canalyzed output values
  must be the same. Also, regardless of which variable enters at the
  current iteration, the other variable will have $2^{m-2}$ canalyzing
  input values remaining at the next iteration, and will still be a
  canalyzing variable at the next iteration with the same canalyzed
  value.  Now, notice in Theorem~\ref{thm:diamonds} above that if
  $\Diamond_i = \Diamond_{i+1}$ for $i = 1, \ldots, k-1$,
  interchanging the variable order for $x_{\sigma(i)}$ and
  $x_{\sigma(i+1)}$ does not change the function.  Since $\Diamond_i$
  is determined by the canalyzed output ($b_i$), the function is the
  same regardless of which variable is selected at the current
  iteration. An analogous argument holds when more than two canalyzing
  variables are present.
\end{proof}
It is now easy to see that the nested canalyzing structure introduced
in Equation~\ref{eq:partncf} is well-defined since the remaining
function $g$ is unique.
%%

%%==========================================================
\section{Activities and Sensitivities}\label{sec:activities}
%%==========================================================

In this section we compute the expected activities and sensitivities
of functions based on their canalyzing depth, and in the next section
we will tie these results to the stability of Boolean networks based
on the canalyzing depth of the individual functions. Let
$\bx\in\{0,1\}^k$, and write $\bx^{j,i}=(x_1,\ldots,x_{j-1},i,
x_{j+1},\ldots, x_k)$ and let $\oplus$ be the XOR function. The
partial derivative of $f(x_1, \ldots, x_k)$ with respect to $x_j$ is
\[
\dfdx = f(\bx^{j,0}) \oplus f(\bx^{j,1}).
\]
The activity (or influence) of a variable $x_j$ in $f$ is
\begin{equation}\label{eq:activity}
  \alpha^f_j(\bx) =\frac{1}{2^k} \sum_{\bx\in\{0,1\}^k} \dfdx
\end{equation}
and the sensitivity of $f$ is defined by 
\[
s^f(\bx) = \sum_{i=1}^k \chi\left[f(\bx \oplus e_i) \neq
  f(\bx)\right],
\]
where $e_i$ is the $i^{\rm th}$ unit vector and $\chi$ is an indicator
function. The activity $\alpha^f_j$ quantifies how often toggling the
$j^{\rm th}$ bit of $\bx$ toggles the output of $f$, and the
sensitivity $s^f(\bx)$ measures the number of ways that toggling a bit
of $\bx$ toggles the output of $f$. The average sensitivity of $f$ is
the expected value of $s^f(\bx)$ taken uniformly over all
$\bx\in\{0,1\}^k$, i.e.,
\begin{equation}\label{eq:expsens}
  s^f=E\left[s^f(\bx) \right] =
  \sum_{i=1}^k \alpha_i^f.
\end{equation}

In~\cite{Shmulevich:04}, Shmulevich and Kauffman show that a random
unbiased Boolean function in $k$ variables has average sensitivity
$\frac{k}{2}$. Also, they prove that for an unbiased canalyzing
function (i.e., depth at least $1$) with canalyzing variable $x_1$,
the expected activities of $(x_1, \ldots, x_k)$ are given
by 
\begin{equation}\label{basecase}
E\left[\alpha^f\right]=\left(\frac{1}{2}, \frac{1}{4}, \frac{1}{4},
\ldots, \frac{1}{4} \right),
\end{equation} 
and hence the average sensitivity is $s^f=\frac{k+1}{4}$. The
following theorem extends this to functions of arbitrary canalyzing
depth.
\begin{theorem}
\label{thm:activities}
  Let $f$ be a Boolean function in $k$ variables with nested
  canalyzing depth at least $d$. Renumbering the variables if
  necessary, assume that $x_1, \ldots, x_d$ is a canalyzing sequence.
  Then, if we assume a uniform distribution on the function inputs,
  the expected activities of the variables $(x_1, \ldots, x_k)$ are
  given by
  \begin{equation}\label{eq:actd}
    E\left[\alpha^{f}\right] = \left(\frac{1}{2}, \frac{1}{4}, \ldots,
    \frac{1}{2^d}, \frac{1}{2^{d+1}}, \ldots,
    \frac{1}{2^{d+1}}\right).
  \end{equation} 
  Furthermore, the expected sensitivity of $f$ is
  \begin{equation}\label{eq:sensd}
    E\left[s^{f}\right] = \frac{k-d}{2^{d+1}}+ \sum_{i=1}^d \frac{1}{2^i} =
    \frac{k-d}{2^{d+1}}+1-\frac{1}{2^d}\,.
  \end{equation}
\end{theorem}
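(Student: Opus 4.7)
The plan is to apply Theorem~\ref{thm:diamonds} to put $f$ into the canonical form $f = y_1 \Diamond_1(y_2 \Diamond_2(\cdots \Diamond_d g))$ (assuming $\sigma=\mathrm{id}$ after relabeling), and then to compute each expected activity $E[\alpha_j^f]$ by a direct conditional probability calculation over uniform inputs $\bx$. Linearity of expectation in~\eqref{eq:expsens} then delivers the expected sensitivity as a sum.

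For $j \leq d$: I would first observe that $\partial f/\partial x_j$ is automatically zero whenever some earlier variable $x_i$ (with $i<j$) takes its canalyzing value $a_i$, because then $f = b_i$ regardless of $x_j$. Hence the derivative is nonzero only on the event $A_j = \{x_i \neq a_i : i < j\}$, which has probability $2^{-(j-1)}$. Conditional on $A_j$, setting $x_j = a_j$ gives $f = b_j$, while setting $x_j \neq a_j$ gives $f = h(x_{j+1}, \ldots, x_k) := y_{j+1}\Diamond_{j+1}(\cdots\Diamond_d g)$. Thus $\partial f/\partial x_j = 1$ on $A_j$ iff $h \neq b_j$, and the claim $E[\alpha_j^f] = 2^{-j}$ reduces to $E[\Pr_\bx[h \neq b_j \mid A_j]] = 1/2$.

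For $j > d$: the same reasoning shows the derivative is nonzero only when all $d$ canalyzing variables are non-canalyzing (probability $2^{-d}$), and on that event $f = g$, so $\partial f/\partial x_j = \partial g/\partial x_j$. For a uniformly random residual $g$, any fixed coordinate has expected activity $1/2$ (since $g(\bx^{j,0})$ and $g(\bx^{j,1})$ are independent Bernoulli bits), yielding $E[\alpha_j^f] = 2^{-(d+1)}$. Summing via~\eqref{eq:expsens} gives
\[
E[s^f] = \sum_{j=1}^d 2^{-j} + \frac{k-d}{2^{d+1}} = 1 - \frac{1}{2^d} + \frac{k-d}{2^{d+1}},
\]
which is exactly~\eqref{eq:sensd}.

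The main obstacle is justifying the $1/2$ collapse in the $j\leq d$ case, namely $E[\Pr_\bx[h \neq b_j \mid A_j]] = 1/2$. I would prove this by conditioning on the smallest index $i \in \{j+1, \ldots, d\}$ (if any) with $x_i = a_i$: on that sub-event, of probability $2^{j-i}$, the tail satisfies $h = b_i$, and if no such $i$ exists (an event of probability $2^{j-d}$) then $h = g(x_{d+1}, \ldots, x_k)$. Taking expectations over uniformly chosen canalyzed values $b_{j+1}, \ldots, b_d$ and a uniformly random residual $g$, each indicator $[b_i = b_j]$ and the probability $\Pr_\bx[g = b_j]$ collapse to $1/2$, and the geometric weights $\sum_{i=j+1}^d 2^{j-i} + 2^{j-d} = 1$ telescope to give exactly $1/2$. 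The subtle point is pinning down the probabilistic model under which these expectations are taken; once that is fixed the remaining computation is routine bookkeeping.
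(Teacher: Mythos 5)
Your proposal is correct and takes essentially the same route as the paper's proof: both factor the expected activity $E[\alpha_j^f]$ as the probability $2^{-(j-1)}$ (or $2^{-d}$ for $j>d$) that no earlier canalyzing variable receives its canalyzing input, times a conditional output-flip probability of $\frac{1}{2}$, and then sum via Equation~\eqref{eq:expsens}. The only difference is that where the paper simply asserts the conditional probability equals $\frac{1}{2}$ ``since $f$ is a random, unbiased function,'' you justify it with the telescoping decomposition over the first subsequent canalyzing variable to receive its input, together with uniform $b_i$'s and an unbiased residual $g$, which makes that step (and the underlying probabilistic model) explicit.
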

\begin{proof}
  Since we are assuming a uniform distribution on the function inputs,
  for any variable $x_j$, $1\leq j \leq k$, we can think of the
  activity of $x_j$ as the probability that changing the input to
  the $j^{\rm th}$ entry changes the function output.  That is, 
  \[
    \alpha_j^f(\bx) = \frac{1}{2^k}\sum_{\bx \in \{0,1\}^k} \dfdx
      = P(f(\bx\oplus e_j) \neq f(\bx)).  
    \]
  Now, if $x_j$ is a canalyzing variable, we know by
  Equation~\ref{eq:partncf} that if at least one of $x_1, \ldots,
  x_{j-1}$ gets its canalyzing input, the input to $x_j$ cannot affect
  the function output and this probability is 0.  Hence, we have
  \begin{align*}
    \alpha_j^f 
    &=P(f(\bx\oplus e_j) \neq f(\bx)) \\ &=P(f(\bx\oplus e_j) \neq
    f(\bx)|x_1\neq a_1, \ldots, x_{j-1} \neq a_{j-1}) P(x_1\neq a_1,
    \ldots, x_{j-1} \neq a_{j-1}).
  \end{align*}
  Since each canalyzing variable receives its canalyzing input with
  probability $\frac{1}{2}$,
  \begin{displaymath}
    P(x_1\neq a_1,\ldots, x_{j-1} \neq a_{j-1}) = \left(\frac{1}{2}\right)^{j-1}.
  \end{displaymath}
  Also, since $f$ is a random, unbiased function, 
  \begin{displaymath}
    P(f(\bx\oplus e_j) \neq f(\bx)|x_1\neq a_1, \ldots, x_{j-1} \neq
    a_{j-1}) = \frac{1}{2}.
  \end{displaymath}
  Therefore, 
  \begin{displaymath}
    \alpha_j^f = \frac{1}{2} \cdot \frac{1}{2^{j-1}} =
    \frac{1}{2^{j}}.
  \end{displaymath}
  
  Alternatively, if $x_j$ is a non-canalyzing variable, 
  the input to $x_j$ is only relevant when none of the
  canalyzing variables $x_1,\ldots,x_d$ get their canalyzing inputs.
  Using a similar argument as above, we see that
  \begin{align*}
   \alpha_j^f &=P(f(\bx\oplus e_j) \neq f(\bx))\\
   &=P(f(\bx\oplus e_j) \neq f(\bx)|x_1\neq a_1, 
   \ldots, x_{d} \neq a_{d}) P(x_1\neq a_1, \ldots, x_{d} \neq
   a_{d})\\
   &=\frac{1}{2}\cdot \frac{1}{2^d}=\frac{1}{2^{d+1}}.
  \end{align*}
  Equation~\ref{eq:sensd} now follows from Equation~\ref{eq:expsens}.
\end{proof}
Note that this theorem may also be proven via induction on $d$, with
Equation~\ref{basecase} as a base case, following an argument similar
to that in~\cite{Shmulevich:04}.

By Theorem~\ref{thm:activities}, the average sensitivity of a function
decreases as the depth increases. However, the differences in
sensitivity become increasingly smaller, and are precisely
\begin{align*}
  E[s^{f_d}] - E[s^{f_{d+1}}] &= 1-\frac{1}{2^d} + \frac{k-d}{2^{d+1}}
  - 1+\frac{1}{2^{d+1}} - \frac{k-d-1}{2^{d+2}} 
  \\ &=\frac{k-d-1}{2^{d+2}} \geq 0\,,\quad \textrm{ when } k-d \geq 1\,.
\end{align*}
Observe that this quantity rapidly goes to zero, and so each
subsequent canalyzing variable has a much smaller impact on the
sensitivity.  Thus, the difference in sensitivity between fully nested
canalyzing functions and partially nested canalyzing functions of
sufficient depth is very slight.  For example, Table~\ref{table3}
gives the expected sensitivities for PNCFs with $k=6$ and $k=12$ input
variables, respectively. To compare the two, we normalize the depth
$d$ by the number of variables $k$.

\begin{table*}[!htbp]
  \caption{The expected sensitivity $E\left[s^{f_d}\right]$ for PNCFs
    in $k$ variables of depth $d$.}
  \begin{tabular*}{\hsize}{|@{\extracolsep{\fill}}cccccccc|}
    \hline
    $d/k$ & 0 & $1/6$ & $1/3$ & $1/2$ & $2/3$ & 
    $5/6$ & 1 \\
    \hline
    $k=6$ & 3.0000 & 1.7500 & 1.2500 & 1.0625 & 1.0000 
    & 0.9844 & 0.9844 \\
    $\,k=12$ & 6.0000 & 2.0000 & 1.1875 &  1.0313 & 1.0039 
    & 1.0000 & 0.9998 \\
    \hline
  \end{tabular*}
  \label{table3}
\end{table*}

%%===========================================================================
\section{Stability and Criticality vs. Canalyzing Depth}\label{sec:derrida}
%%===========================================================================

Boolean networks created using classes of functions with a lower
sensitivity have been shown to be more dynamically ordered than those
with a higher sensitivity~\cite{Shmulevich:04}. This stability is an
important feature of biologically relevant functions, and so it is
essential to determining the utility of such functions as biological
models. In order to quantify the extent to which functions with larger
depth (and hence smaller sensitivity) result in more dynamically
stable Boolean networks, we constructed random Boolean networks
composed of PNCFs of varying depth. We used the annealed approximation
mean-field theory due to \cite{Derrida:86a} and \emph{Derrida curves}
to display the results. The curves are defined as follows. Let
$\bx^1(t)$ and $\bx^2(t)$ be two states in a random Boolean network,
and define $\rho(t)$ to be the normalized Hamming distance, i.e.,
$\rho(t)=\frac{1}{n}\cdot ||\bx^1(t)-\bx^2(t)||_1$, where
$||\cdot||_1$ is the standard $\ell^1$ metric. The Derrida curve is a
plot of $\rho(t+1)$ versus $\rho(t)$ averaged uniformly over different
states and networks. If the curve for small values of $\rho(t)$ lies
below the line $y=x$, then small perturbations are likely to die out,
and the network is said to be in the \emph{frozen} phase. The phase
spaces of frozen networks consist of many fixed points and small
attractor cycles. If the curve lies above the line $y=x$, then small
perturbations generally propagate throughout the network, and the
network is said to be in the \emph{chaotic} phase, characterized by
long attractor cycles. The boundary threshold between these two is the
well-known \emph{critical} phase~\cite{Drossel:09}. It has been
recently suggested
\cite{Balleza:08,Nykter:08a,Nykter:08b,Shmulevich:05} that many
biological networks tend to lie in the critical phase, as these
systems must be stable enough to endure changes to their environment,
yet flexible enough to adapt when necessary.

We constructed ensembles of randomly wired networks with $n=100$
nodes, each with a randomly chosen Boolean function with $k=12$
variables. We chose the individual functions by sampling uniformly
across the class of PNCFs of depth \emph{at least} $d$, for
$d=0,2,4,\dots,12$. We will refer to such a network as a
\emph{depth-$d$ network}. To sample uniformly across all PNCFs of
depth at least $d$, we used a random number generator to select $d$
nested canalyzing variables, and a permutation $\sigma$ of these
variables. We then used a random bit generator to select the
canalyzing values $a_1,\ldots, a_d$ and canalyzed values $b_1,\ldots
b_d$. We had a potential bias in our function selection arising from
the fact that if $\Diamond_i=\Diamond_{i+1}$ (or equivalently,
$b_i=b_{i+1}$, as $b_i$ determines $\Diamond_i$), then interchanging
$\sigma(i)$ with $\sigma(i+1)$ does not change the function. To
eliminate this bias, we only allowed functions where
$\sigma(i-1)<\sigma(i)$ whenever $\Diamond_{i-1}=\Diamond_i$ for
$i=2,\ldots, d-1$. Finally, we used a random bit generator to
determine the function in the remaining $k-d$ variables. Our sampling
method for creating the random networks is similar
to~\cite{Shmulevich:04}. For each $d$, we created 25 random Boolean
networks using functions of said depth and sampled from each
network. Since $\rho(t+1)$ is determined experimentally, we computed
it as the sample mean, sampled over the depth-$d$ random networks
for each depth.  We also constructed Derrida curves using the sampling
method described in~\cite{Kauffman:93}, which generated nearly
identical results.  The resulting Derrida curves are shown in
Figure~\ref{fig:Derrida}.

\begin{figure}
  \centering \includegraphics[scale=0.4]{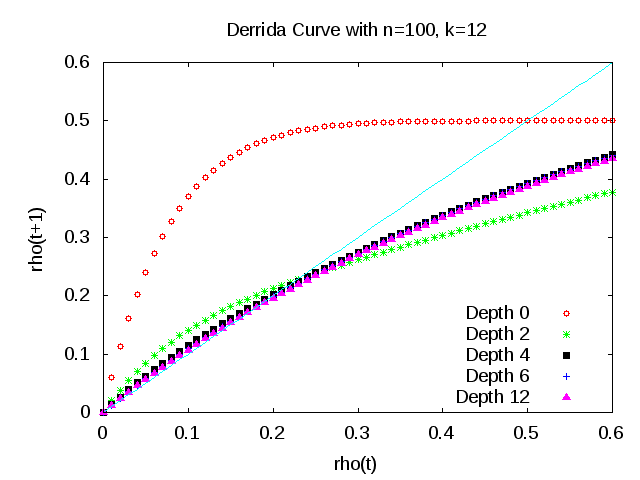}
  \includegraphics[scale=0.4]{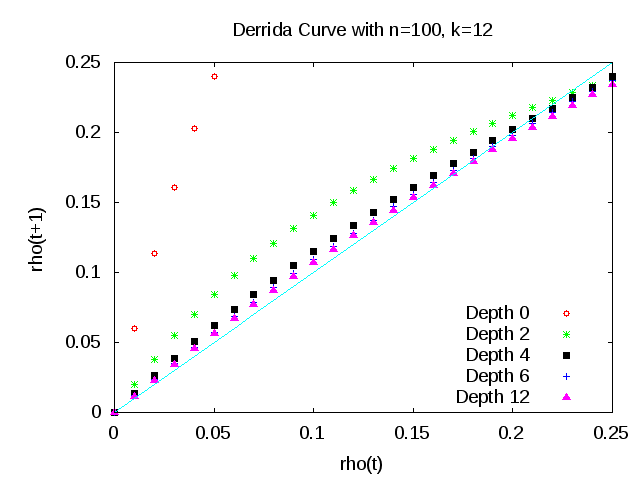}
  \caption{Derrida curves for random Boolean networks with $n=100$
    nodes and $k=12$ inputs per function.}
  \label{fig:Derrida}
\end{figure}

The Derrida curves corresponding to networks constructed using
functions of larger depth show more orderly dynamics than those of
smaller depth. This reaffirms the idea in~\cite{Shmulevich:04} that
sensitivity of a function is an indicator of the dynamical stability
of networks constructed with these functions. The curves move closer
together as the depth increases. For example, the depth-$2$ networks
are much more stable than the depth-$0$ networks, and networks with
functions of depth at least $4$ are even more stable; however, the
marginal benefit of stability as depth increases drops off sharply --
the Derrida curves are nearly identical for networks with functions of
depth $4$, $6$, $8$, $10$, and $12$. This matches our theoretical
results of Theorem~\ref{thm:activities} on expected activities and
sensitivities and illustrates how the earlier canalyzing variables
have a much greater influence. It also suggests that there is little
benefit in imposing the full nested canalyzing structure in network
models, as functions with large enough canalyzing depth exhibit very
similar stability results without the rigidity of being fully nested
canalyzing. Additionally, for small values of $\rho(t)$, the curves
quickly approach the line $y=x$ from above, indicating that these
networks rapidly move from the chaotic phase toward the critical
phase. This is contrary to the claim of~\cite{Kauffman:03} that
networks comprised of canalyzing functions are always in the frozen
phase, but it is in alignment with recent findings of
\cite{Peixoto:10} which also refute Kauffuman's claim, accrediting his
results to his choice in parametrization.

%%==========================
\section{Concluding Remarks}
%%==========================

Canalyzing and nested canalyzing functions have been proposed as gene
network models because they exhibit biologically relevant
properties. While it is reasonable to expect some Boolean models to
have functions with some degree of nested canalyzation, fitting
biological data to fully nesting canalyzing functions can be at times
artificial, and at other times simply incorrect.  Our analysis of the
depth that a function retains a canalyzing structure elucidates the
role of canalyzation in the dynamics of networks over these functions.
Our results on the structure of PNCFs generalize known results on
NCFs, and our results on the activities and sensitivities of variables
in functions of a given depth generalize similar theorems of simple
canalyzing functions. Moreover, we saw that in random Boolean
networks, the stability increases with canalyzing depth. However, the
marginal gain in stability drops off quickly, in that the stability of
our networks with functions of depth at least $d=\frac{k}{3}$ were
nearly identical to those with full depth $d=k$. Additionally, just a
few degrees of canalyzation are necessary to drop the network into the
critical regime, in which many real networks are believed to
exist. Together, this suggests that using NCFs in biological models
for stability reasons is not only at times rather contrived, but
simply unnecessary.

%\bibliographystyle{plain}
%\bibliography{boolean}

\end{document}